\theoremstyle{definition}
\newtheorem{lemma}{\normalfont \bfseries Lemma}
\newtheorem{theorem}{\normalfont\bfseries Theorem}
\newtheorem{assumption}{\normalfont\bfseries Assumption}
\newtheorem{definition}{\normalfont\bfseries Definition}
\newtheorem{remark}{\normalfont\bfseries Remark}
\newtheorem{example}{\normalfont\bfseries Example}
\newcommand{\R}{\mathbb{R}}
\newcommand{\Kinf}{\mathcal{K}_\infty}
\newcommand{\U}{\mathcal{U}}
\renewcommand{\S}{\mathcal{S}}
\newcommand{\Sb}{\mathcal{S}_{\rm b}}
\newcommand{\SI}{\mathcal{S}_{\rm I}}
\newcommand{\Sd}{\mathcal{S}_{\rm d}}
\newcommand{\C}{\mathcal{C}}
\newcommand{\CI}{\mathcal{C}_{\rm I}}
\newcommand{\Cd}{\mathcal{C}_{\rm d}}
\newcommand{\kd}{k_{\rm d}}
\newcommand{\hb}{h_{\rm b}}
\newcommand{\kb}{k_{\rm b}}
\newcommand{\alphab}{\alpha_{\rm b}}
\newcommand{\gammab}{\gamma_{\rm b}}
\newcommand{\sigmab}{\sigma_{\rm b}}
\newcommand{\phib}{\phi_{\rm b}}
\newcommand{\Hb}{H_{\rm b}}
\newcommand{\gradh}{\nabla h}
\newcommand{\gradhb}{\nabla \hb}
\newcommand{\gradP}{\nabla P}
\newcommand{\derp}[2]{\frac{\partial #1}{\partial #2}}
\title{\LARGE \bf
Safety-Critical Control with Bounded Inputs via Reduced Order Models
}
\author{Tamas G. Molnar and Aaron D. Ames%
\thanks{*This research is supported in part by the National Science Foundation, CPS Award \#1932091, Dow (\#227027AT) and Aerovironment.}%
\thanks{The Authors are with the Department of Mechanical and Civil Engineering, California Institute of Technology, Pasadena, CA 91125, USA.
{\tt\small \{tmolnar, ames\}@caltech.edu}}%
}
\begin{document}

\maketitle
\thispagestyle{empty}
\pagestyle{empty}

\begin{abstract}
Guaranteeing safe behavior on complex autonomous systems---from cars to walking robots---is challenging due to the inherently high dimensional nature of these systems and the corresponding complex models that may be difficult to determine in practice.  
With this as motivation, this paper presents a safety-critical control framework that leverages reduced order models to ensure safety on the full order dynamics---even when these models are subject to disturbances and bounded inputs (e.g., actuation limits).  
To handle input constraints, the backup set method is reformulated in the context of reduced order models, and conditions for the provably safe behavior of the full order system are derived.
Then, the input-to-state safe backup set method is introduced to provide robustness against discrepancies between the reduced order model and the actual system.
Finally, the proposed framework is demonstrated in high-fidelity simulation, where a quadrupedal robot is safely navigated around an obstacle with legged locomotion by the help of the unicycle model.

\end{abstract}

\section{INTRODUCTION}
\label{sec:intro}




Real-life engineering systems often exhibit complicated, nonlinear and high-dimensional dynamic behavior.  This is especially true of autonomous (robotic) systems, where dynamics play a key role in achieving desired behaviors.  
This makes them challenging to control, and to attain formal guarantees of stable or safe evolution for the closed control loop.
To tackle such complex control problems, simplified, reduced order models (ROMs) of the dynamic behavior are often utilized during controller synthesis with great practical success \cite{fawcett2022toward, xiong2022}.  Yet there is often a theoretic gap between behaviors certifiable on the ROM and the resulting behaviors observed on the full order system (FOS).

In this paper, we focus on the role of ROMs in safety-critical control.
Given an accurate dynamical model, there exist tools to synthesize controllers that provide formal guarantees of safety.
For example, {\em control barrier functions (CBFs)}~\cite{AmesXuGriTab2017} have been proposed to achieve this goal, and they have been proven to be successful in a wide variety of applications from multi-robot systems~\cite{glotfelter2017nonsmooth} to spacecraft docking~\cite{dunlap2022comparing}.
In many applications, a significant challenge is maintaining safety with limited actuation:
most physical systems have finite actuation capability, which manifests itself in the underlying models as constraints on the control input.
Several methods have been proposed for input constrained safety-critical control, including the backup set method~\cite{gurriet2020scalable}, input constrained CBFs~\cite{agrawal2021safe}, and neural CBFs~\cite{liu2022safe}.
While these approaches have shown success in various domains, a general approach remains elusive.

\begin{figure}
\centering
\includegraphics[scale=1]{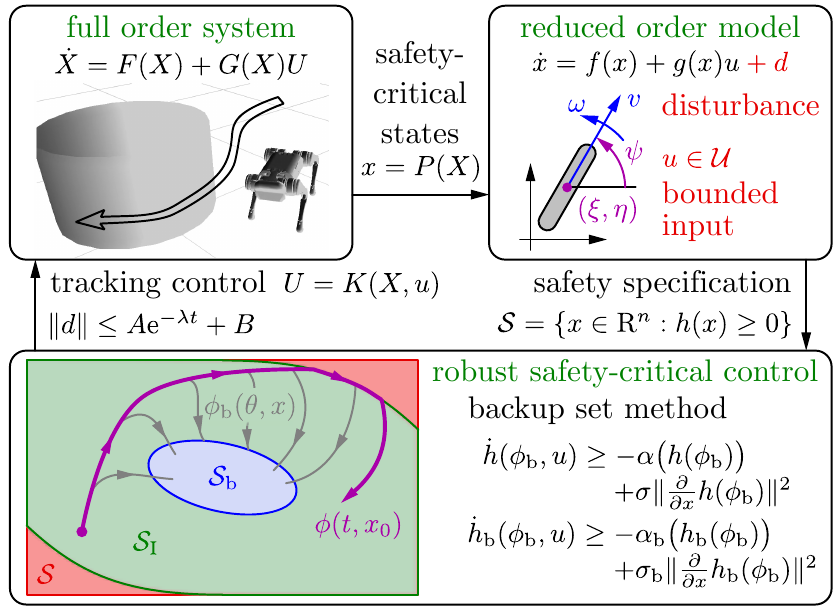}
\vspace{-0.1cm}
\caption{
Overview of the proposed safety-critical control framework.
}
\label{fig:concept}
\vspace{-0.4cm}
\end{figure}

Another important challenge in safety-critical control is that no ROM is ever fully accurate: there is always discrepancy between the ROM and the actual FOS.
Consequently, robustness is of key importance, and one needs to ensure that safety is preserved even under such discrepancies---and with limited actuation.
There exist CBF formulations that provide robustness against disturbances, such as the approaches of input-to-state safety~\cite{ames2019issf, Alan2022} and robust CBFs~\cite{jankovic2018robust}.
These formulations, however, have not yet accommodated input constraints.
Meanwhile, the above approaches that address input constraints have not yet been endowed with robustness.
On the other hand, there exist robust reachability approaches that handle both input constraints and disturbances~\cite{ding2011reachability, bansal2017HJreachability, Kousik2020, Choi2021}, but these methods typically suffer from the curse of dimensionality and become intractable for higher dimensional ROMs.

This paper presents a robust safety-critical control framework, illustrated in Fig.~\ref{fig:concept}, wherein input constrained ROMs and CBFs are leveraged to achieve formal safety guarantees on systems with complex full order dynamics. 
To this end, we make the following three key contributions.
First, the backup set method is reformulated in the context of ROMs, and conditions for provably safe behavior are given that account for the discrepancy between the ROM and the FOS that tracks it.
Second, the {\em input-to-state safe backup set method} is introduced to provide robustness against the discrepancy with less restrictive conditions.
Third, the method is implemented in the context of an obstacle avoidance problem, wherein safe walking on a quadrupedal robot using the unicycle ROM is demonstrated
in high-fidelity simulation.






\section{SAFETY UNDER INPUT CONSTRAINTS}
\label{sec:backup}

Consider the control-affine system:
\begin{equation}
    \dot{x} = f(x) + g(x) u,
\label{eq:system}
\end{equation}
with state ${x \in \R^n}$, input ${u \in \U}$, convex admissible input set ${\U \subseteq \R^m}$, and locally Lipschitz continuous functions ${f : \R^n \to \R^n}$, ${g : \R^n \to \R^{n \times m}}$.
Consider a controller ${k : \R^n \to \U}$, ${u = k(x)}$, that yields the closed control loop:
\begin{equation}
    \dot{x} = f(x) + g(x) k(x),
\label{eq:closedloop}
\end{equation}
associated with the initial condition ${x(0) = x_0 \in \R^n}$.
If $k$ is locally Lipschitz continuous, the closed-loop system has a unique solution ${\phi(t,x_0)}$ over an interval of existence.
For simplicity, we assume that the solution exists for all ${t \geq 0}$.

Our goal is to design the controller $k$ such that the closed-loop system is safe.
Specifically, we consider the system to be safe if its state $x$ is located within a safe set ${\S \subset \R^n}$.
For the safe evolution of the closed control loop, we require the forward invariance of the safe set $\S$ along~(\ref{eq:closedloop}).
\begin{definition}
Given ${k: \R^n \to \U}$, set ${\S \subset \R^n}$ is {\em forward invariant} along~(\ref{eq:closedloop}) if ${x_0 \in \S \implies \phi(t,x_0) \in \S}$, ${\forall t \geq 0}$.
\end{definition}
\noindent This requirement can be met only if $\S$ is control invariant.
\begin{definition}
Set $\S \subset \R^n$ is {\em control invariant} if there exists ${k: \R^n \to \U}$ such that $\S$ is forward invariant along~(\ref{eq:closedloop}).
\end{definition}

\subsection{Control Barrier Functions}

Control barrier functions~\cite{AmesXuGriTab2017} provide a powerful tool for safe control design, hence we briefly revisit this method.
Throughout the paper, we consider safe sets defined as the 0-superlevel set of a function ${h : \R^n \to \R}$:
\begin{equation}
    \S = \{x \in \R^n: h(x) \geq 0 \},
\label{eq:safeset}
\end{equation}
such that $h$ is continuously differentiable and zero is a regular value of $h$, i.e., ${h(x) = 0 \implies \gradh(x) \neq 0}$.

\begin{definition}\label{def:CBF}
Function $h$ is a {\em control barrier function (CBF)} for~(\ref{eq:system}) on $\S$ if there exists ${\alpha \in \Kinf}$ such that\footnote{Function ${\alpha : \R_{\geq 0} \to \R_{\geq 0}}$ is of class-$\Kinf$ (${\alpha \in \Kinf}$) if it is continuous, ${\alpha(0)=0}$ and ${\lim_{r \to \infty} \alpha(r) = \infty}$. Note that extended class-$\Kinf$ functions defined over $\R$ are also used to ensure the attractivity of the safe set.}:
\begin{equation}
    \sup_{u \in \U} \dot{h}(x,u) > - \alpha \big( h(x) \big)
\label{eq:CBF_condition}
\end{equation}
holds ${\forall x \in \S}$, where:
\begin{equation}
    \dot{h}(x,u) = \gradh(x) (f(x) + g(x) u).
\end{equation}
\end{definition}


Given a CBF, \cite{AmesXuGriTab2017} established the following safety result.
\begin{theorem}[\cite{AmesXuGriTab2017}] \label{thm:CBF}
\textit{
If $h$ is a CBF for~(\ref{eq:system}) on $\S$, then any locally Lipschitz continuous controller ${k : \R^n \to \U}$
satisfying:
\begin{equation}
    \dot{h} \big( x, k(x) \big) \geq - \alpha \big( h(x) \big)
\label{eq:safety_condition}
\end{equation}
${\forall x \in \S}$ renders $\S$ forward invariant along~(\ref{eq:closedloop}).}
\end{theorem}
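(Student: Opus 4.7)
The plan is to translate the vector-valued invariance statement into a scalar one by tracking the value of $h$ along the closed-loop trajectory, and then to close the argument with a comparison lemma. Concretely, fix $x_0 \in \S$ and define $\eta : [0,\infty) \to \R$ by $\eta(t) = h(\phi(t, x_0))$. Since $k$ is locally Lipschitz and $f$, $g$, $h$ are smooth enough, $\phi(\cdot, x_0)$ is $C^1$ in $t$, so differentiating $\eta$ using the chain rule together with the closed-loop dynamics~(\ref{eq:closedloop}) gives $\dot{\eta}(t) = \gradh(\phi(t,x_0))(f(\phi(t,x_0)) + g(\phi(t,x_0)) k(\phi(t,x_0))) = \dot{h}(\phi(t,x_0), k(\phi(t,x_0)))$.

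The next step is to invoke the controller hypothesis~(\ref{eq:safety_condition}). At any time $t$ for which $\phi(t,x_0) \in \S$, i.e., $\eta(t) \geq 0$, the bound~(\ref{eq:safety_condition}) yields the scalar differential inequality $\dot{\eta}(t) \geq -\alpha(\eta(t))$. Because $\alpha \in \Kinf$ satisfies $\alpha(0) = 0$, the scalar equation $\dot{y} = -\alpha(y)$ has $y \equiv 0$ as an equilibrium. Applying a standard comparison lemma to the differential inequality with initial condition $\eta(0) = h(x_0) \geq 0$ shows that $\eta(t)$ cannot fall below the solution $y(t) \equiv 0$, and thus $\eta(t) \geq 0$ for all $t \geq 0$, which is exactly the forward invariance claim $\phi(t, x_0) \in \S$.

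The technical subtlety I expect to have to justify is the comparison step itself, because $\alpha$ is only continuous and $\Kinf$ is defined on $\R_{\geq 0}$, so uniqueness of solutions of $\dot y = -\alpha(y)$ is not given and the differential inequality is a priori valid only where $\eta \geq 0$. I would handle this by a standard contradiction/Nagumo-type argument: suppose $\eta(t^*) < 0$ for some $t^* > 0$; by continuity of $\eta$ and $\eta(0) \geq 0$, there exists a largest $t_0 \in [0,t^*)$ with $\eta(t_0) = 0$, and on $(t_0, t^*]$ one has $\eta < 0$, so $\phi$ exits $\S$ at $t_0$. But at $t_0$ the hypothesis gives $\dot{\eta}(t_0) \geq -\alpha(\eta(t_0)) = -\alpha(0) = 0$, which together with continuity of $\dot\eta$ from the right at $t_0$ contradicts $\eta$ becoming strictly negative immediately after $t_0$. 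The footnote extension to extended class-$\Kinf$ functions follows in exactly the same way, since the equilibrium at $0$ of the scalar comparison system is preserved.
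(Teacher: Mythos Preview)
The paper does not supply its own proof of this theorem; it is quoted from~\cite{AmesXuGriTab2017} and used as a black box, so there is nothing in the paper to compare against line by line. Your overall strategy---reduce to the scalar function $\eta(t)=h(\phi(t,x_0))$ and control its sign via the differential inequality $\dot\eta\ge -\alpha(\eta)$---is indeed the standard route taken in that reference.

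That said, your closing ``Nagumo-type'' contradiction has a genuine gap. From $\eta(t_0)=0$ and the hypothesis you correctly obtain $\dot\eta(t_0)\ge -\alpha(0)=0$, but this together with continuity of $\dot\eta$ does \emph{not} preclude $\eta$ from becoming strictly negative immediately after $t_0$. A concrete obstruction is $\eta(t)=-(t-t_0)^3$: here $\eta(t_0)=0$, $\dot\eta(t_0)=0$, $\dot\eta$ is continuous, yet $\eta(t)<0$ for all $t>t_0$. Since your inequality $\dot\eta\ge-\alpha(\eta)$ is only asserted while $\eta\ge 0$, it gives you no leverage on $(t_0,t^*]$, and the contradiction does not close. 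The same issue undermines the earlier comparison-lemma sentence: with $\alpha$ merely continuous, $\dot y=-\alpha(y)$ need not have unique solutions through $0$, so ``$\eta$ cannot fall below $y\equiv 0$'' is exactly the point that requires justification rather than assertion.

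The standard repairs are either (i) invoke Nagumo's theorem proper, using the regularity hypothesis $\gradh(x)\neq 0$ on $\partial\S$ so that the tangent-cone condition $\dot h\ge 0$ on the boundary suffices for forward invariance (this is how the present paper argues in its Appendix proof of Lemma~\ref{lemma:backup_feasibility}); or (ii) run a careful comparison against the \emph{minimal} solution of $\dot y=-\alpha(y)$ and show that minimal solutions starting at $y(0)\ge 0$ remain nonnegative, which requires fixing an extension of $\alpha$ to negative arguments. Either route is short, but neither is the bare contradiction you wrote.
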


Condition~(\ref{eq:safety_condition})
can be used as constraint when synthesizing safe controllers.
For example, given a desired controller ${\kd : \R^n \to \U}$, the following quadratic program-based controller can be used for safety-critical control: 
\begin{align}
\begin{split}
    k(x) = \underset{u \in \U}{\operatorname{argmin}} & \quad \| u - \kd(x) \|_{\Gamma}^2 \\
    \text{s.t.} & \quad \dot{h}(x,u) \geq - \alpha \big( h(x) \big),
\end{split}
\label{eq:QP}
\end{align}
where ${\| u \|_{\Gamma}^2 = u^\top \Gamma u}$ and ${\Gamma \in \R^{m \times m}}$ is a positive definit weight matrix that can be tuned.
 
\subsection{Backup Set Method}

While CBFs provide safe behavior, it is nontrivial to verify that a certain choice of $h$ is indeed a CBF satisfying~(\ref{eq:CBF_condition}), especially with bounded inputs (${\U \subset \R^m}$).
An arbitrary $h$ may not have control invariant 0-superlevel set, it may not be a CBF, and
safe inputs satisfying~(\ref{eq:safety_condition}) may not exist.
Consequently, optimization problems like~(\ref{eq:QP}) may be infeasible with input bounds.
The backup set method~\cite{gurriet2020scalable} was proposed to solve this problem, by synthesizing control invariant sets and corresponding safe controllers via the CBF framework.

The backup set method is described as follows; while examples are given below and in~\cite{gurriet2020scalable,chen2021backup}.
First, one must specify a control invariant subset of $\S$, called the {\em backup set}:
\begin{equation}
    \Sb = \{x \in \R^n: \hb(x) \geq 0 \},
\label{eq:backupset}
\end{equation}
such that ${\hb : \R^n \to \R}$ is continuously differentiable, zero is a regular value of $\hb$, i.e., ${\hb(x) = 0 \implies \gradhb(x) \neq 0}$, and  ${\Sb \subseteq \S}$.
Furthermore, one must define a {\em backup controller} ${\kb : \R^n \to \U}$ that renders the backup set forward invariant along the closed-loop system:
\begin{equation}
    \dot{x} = f(x) + g(x) \kb(x) \triangleq f_{\rm b}(x).
\label{eq:backupsystem}
\end{equation}
We denote the solution of~(\ref{eq:backupsystem}) with ${x(0) = x_0 \in \R^n}$ by $\phib(t,x_0)$.
To summarize, the choice of backup set and backup controller must satisfy the following assumption.

\begin{assumption} \label{assum:backup}
The backup set ${\Sb \subseteq \S}$ is control invariant, and the backup controller $\kb$ renders $\Sb$ forward invariant along~(\ref{eq:backupsystem}) while satisfying the input constraints:
\begin{equation}
    x \in \Sb \implies \phib(\theta,x) \in \Sb \subseteq \S, \ \forall \theta \geq 0,
\end{equation}
and ${\kb(x) \in \U}$, ${\forall x \in \S}$.
\end{assumption}

Finding a control invariant subset $\Sb$ is considerably less difficult than verifying that a given $\S$ is control invariant.
With this, by construction, we have a control invariant set $\Sb$ and a safe controller $\kb$ at our disposal.
However, methods for constructing $\Sb$ (see examples in~\cite{gurriet2020scalable,chen2021backup}) may result in a very small set, hence operating the system directly within $\Sb$ may make the behavior overly conservative.
To reduce this conservatism, we enlarge $\Sb$ to the set ${\SI \subseteq \S}$:
\begin{equation}
    \SI = \left\{ x \in \R^n :
    \begin{array}{l}
    \phib(\theta,x) \in \S, \ \forall \theta \in [0,T], \\
    \phib(T,x) \in \Sb 
    \end{array}
    \right\},
\label{eq:invariantset}
\end{equation}
with ${T \geq 0}$; cf.~Fig.~\ref{fig:concept}.
Note that $T$ is a design parameter, the size of $\SI$ increases with $T$, and ${T=0}$ yields ${\SI = \Sb}$.

\begin{lemma}[\cite{gurriet2020scalable}] \label{lemma:backup_invariance}
\textit{
The set ${\SI}$ is control invariant, and the backup controller $\kb$ renders $\SI$ forward invariant along~(\ref{eq:backupsystem}):
\begin{equation}
    x \in \SI \implies \phib(\vartheta,x) \in \SI, \ \forall \vartheta \geq 0.
\label{eq:SI_invariance}
\end{equation}
}
\end{lemma}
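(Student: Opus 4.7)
The plan is to verify the forward invariance of $\SI$ along the backup closed loop~(\ref{eq:backupsystem}) directly from the definition~(\ref{eq:invariantset}), and then deduce control invariance as an immediate corollary (since $\kb$ takes values in $\U$ by Assumption~\ref{assum:backup}). The central tool will be the semigroup (cocycle) property of the autonomous flow $\phib$, namely $\phib(\theta,\phib(\vartheta,x)) = \phib(\theta + \vartheta, x)$ for all $\theta,\vartheta \geq 0$, together with the forward invariance of $\Sb$ granted by Assumption~\ref{assum:backup}.

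Fix any $x \in \SI$ and any $\vartheta \geq 0$, and set $y = \phib(\vartheta,x)$. I would show $y \in \SI$ by checking the two defining conditions of~(\ref{eq:invariantset}): that $\phib(\theta,y) \in \S$ for all $\theta \in [0,T]$ and that $\phib(T,y) \in \Sb$. For the second condition, rewrite $\phib(T,y) = \phib(T+\vartheta,x) = \phib(\vartheta,\phib(T,x))$; since $x \in \SI$ gives $\phib(T,x) \in \Sb$, forward invariance of $\Sb$ under $\kb$ then yields $\phib(\vartheta,\phib(T,x)) \in \Sb$, as required.

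For the first condition I would split into two cases based on how $\vartheta$ compares to $T$. If $\vartheta \leq T$, then for $\theta \in [0,T-\vartheta]$ the time $\theta + \vartheta$ lies in $[0,T]$, so $\phib(\theta,y) = \phib(\theta+\vartheta,x) \in \S$ directly from $x \in \SI$; for $\theta \in [T-\vartheta,T]$ I would write $\phib(\theta,y) = \phib(\theta+\vartheta - T,\phib(T,x))$, and since $\phib(T,x) \in \Sb$ and $\Sb \subseteq \S$ is forward invariant, this point lies in $\S$. If instead $\vartheta > T$, then $y = \phib(\vartheta - T,\phib(T,x)) \in \Sb$ by the same invariance argument, so $\phib(\theta,y) \in \Sb \subseteq \S$ for every $\theta \geq 0$, which in fact also shows $\phib(T,y) \in \Sb$ and hence $y \in \Sb \subseteq \SI$.

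I do not anticipate a genuine obstacle; the only bookkeeping subtlety is making sure the semigroup property is applied with the correct decomposition in the regime where $\theta + \vartheta$ exceeds $T$, so that the tail of the trajectory is handled by forward invariance of $\Sb$ rather than by the definition of $\SI$. Once forward invariance of $\SI$ along~(\ref{eq:backupsystem}) is established, control invariance of $\SI$ follows because $\kb : \R^n \to \U$ is an admissible controller by Assumption~\ref{assum:backup}, completing the proof.
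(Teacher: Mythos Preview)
Your proof is correct and follows essentially the same approach as the paper: both rely on the semigroup property $\phib(\theta,\phib(\vartheta,x)) = \phib(\theta+\vartheta,x)$ together with the forward invariance of $\Sb$ from Assumption~\ref{assum:backup}. The only cosmetic difference is that the paper first consolidates the observation $x \in \SI \implies \phib(\theta,x) \in \S$ for all $\theta \geq 0$ (combining the definition of $\SI$ on $[0,T]$ with invariance of $\Sb$ for $\theta \geq T$), which lets it avoid your case split on whether $\vartheta \leq T$ or $\vartheta > T$.
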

\noindent For the proofs of Lemmas~\ref{lemma:backup_invariance} and~\ref{lemma:backup_feasibility}, see the Appendix.

Thus,~(\ref{eq:invariantset}) yields a larger, practically more useful control invariant set $\SI$ than the backup set $\Sb$; see~\cite{chen2021backup} for an analysis about the size of $\SI$.
We use $\SI$ to provide safety, based on the framework of CBFs.
We rely on
the derivatives:
\begin{align}
\begin{split}
    \dot{h} \big( \phib(\theta,x), u \big) & = \derp{h \big( \phib(\theta, x) \big)}{x} \big( f(x) + g(x) u \big), \\
    \dot{h}_{\rm b} \big( \phib(T,x), u \big) & = \derp{\hb \big( \phib(T, x) \big)}{x} \big( f(x) + g(x) u \big).
\end{split}
\label{eq:hdot}
\end{align}
Then, we can state that the backup controller $\kb$ satisfies safety conditions analogous to~(\ref{eq:safety_condition}).

\begin{lemma}[\cite{gurriet2020scalable}] \label{lemma:backup_feasibility}
\textit{
There exist $\alpha, \alphab \in \Kinf$ such that $\forall x \in \SI$:
\begin{align}
\begin{split}
    \dot{h} \big( \phib(\theta,x), \kb(x) \big) & \!\geq\! - \alpha \big( h(\phib(\theta,x)) \big), \ \forall \theta \!\in\! [0,T], \\
    \dot{h}_{\rm b} \big( \phib(T,x), \kb(x) \big) & \!\geq\! - \alphab \big( \hb(\phib(T,x)) \big).
\end{split}
\label{eq:backup_feasibility}
\end{align}
}
\end{lemma}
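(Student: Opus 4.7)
The plan is to reduce each inequality in~(\ref{eq:backup_feasibility}) to a statement about the time evolution of $h$ (respectively $\hb$) along the backup closed-loop trajectory starting at $x$, and then construct the class-$\Kinf$ bounds from the fact that those evolutions remain nonnegative.

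First, I would rewrite the left-hand sides using the semigroup identity $\phib(\theta, \phib(\tau, x)) = \phib(\theta + \tau, x)$ for the backup flow. Substituting $u = \kb(x)$ into~(\ref{eq:hdot}) and recognizing the right-hand side as the $\tau$-derivative at $0$ of $h\bigl(\phib(\theta, \phib(\tau, x))\bigr)$ gives
\[
\dot{h}\bigl(\phib(\theta, x), \kb(x)\bigr) = \left.\der{}{\tau}\right|_{\tau = 0} h\bigl(\phib(\theta + \tau, x)\bigr),
\]
and analogously for $\hb$ with $\theta$ replaced by $T$. Thus each inequality in~(\ref{eq:backup_feasibility}) is a lower bound on the instantaneous rate of change, along the actual backup trajectory from $x$, of a time-shifted evaluation of $h$ or $\hb$.

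Second, I would invoke Lemma~\ref{lemma:backup_invariance} to conclude that $\phib(\tau, x) \in \SI$ for every $\tau \geq 0$ whenever $x \in \SI$. Applying the definition~(\ref{eq:invariantset}) to the shifted state $\phib(\tau, x)$ and using the semigroup identity again, it follows that $h\bigl(\phib(\theta + \tau, x)\bigr) \geq 0$ for all $\tau \geq 0$ and $\theta \in [0, T]$, and $\hb\bigl(\phib(T + \tau, x)\bigr) \geq 0$ for all $\tau \geq 0$. In other words, the maps whose derivatives appear in the first step are nonnegative on $[0, \infty)$.

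The remaining step, and the principal obstacle, is to produce a \emph{single} $\alpha \in \Kinf$ (respectively $\alphab \in \Kinf$) that yields the inequality uniformly over all $x \in \SI$ and $\theta \in [0, T]$. I would obtain such an $\alpha$ by a standard converse-CBF construction, dominating
\[
\alpha(r) \geq \sup \Bigl\{ -\gradh(y)\, f_{\rm b}(y) : y = \phib(\theta, x),\ x \in \SI,\ \theta \in [0, T],\ h(y) \leq r \Bigr\},
\]
and extending the resulting bound to a continuous, strictly increasing, unbounded function vanishing at the origin; $\alphab$ is built analogously using $\gradhb$ and the endpoint set reached at time $T$. Well-definedness of the supremum rests on continuity of $\gradh$ and $f_{\rm b}$ together with boundedness of the relevant portion of the flow (inherited from local Lipschitz continuity and, where necessary, compactness of $\Sb$ and $\SI$), while consistency at $r = 0$ follows because the invariance established in the previous step forces $\gradh(y) f_{\rm b}(y) \geq 0$ whenever $h(y) = 0$, matching $\alpha(0) = 0$.
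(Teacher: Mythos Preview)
Your proposal is correct and follows essentially the same route as the paper: both arguments use Lemma~\ref{lemma:backup_invariance} to obtain nonnegativity of the derivative on the zero level set (you derive it directly from the semigroup identity, the paper cites Nagumo's theorem), and then build the $\Kinf$ functions by dominating the supremum of $-\dot h$ over sublevel sets parameterized by $r$, finally taking the worst case over $\theta\in[0,T]$. The only cosmetic difference is that you parametrize the supremum over image points $y=\phib(\theta,x)$ while the paper parametrizes it over base points $\tilde x$, which are equivalent since your step one shows $\dot h(\phib(\theta,x),\kb(x))=\gradh(y)\,f_{\rm b}(y)$.
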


This leads to the main result of the backup set method.
\begin{theorem}[\cite{gurriet2020scalable}] \label{thm:backup}
\textit{
Consider system~(\ref{eq:system}), set $\S$ in~(\ref{eq:safeset}), set $\Sb$ in~(\ref{eq:backupset}), Assumption~\ref{assum:backup}, and set $\SI$ in~(\ref{eq:invariantset}).
Then, there exist ${\alpha, \alphab \in \Kinf}$ such that a controller ${k : \R^n \to \U}$ satisfying:
\begin{align}
\begin{split}
    \dot{h} \big( \phib(\theta,x), k(x) \big) & \!\geq\! - \alpha \big( h(\phib(\theta,x)) \big), \ \forall \theta \!\in\! [0,T], \\
    \dot{h}_{\rm b} \big( \phib(T,x), k(x) \big) & \!\geq\! - \alphab \big( \hb(\phib(T,x)) \big).
\end{split}
\label{eq:backup_condition}
\end{align}
${\forall x \in \SI}$ is guaranteed to exist.
Moreover, any locally Lipschitz continuous controller ${k : \R^n \to \U}$
that satisfies~(\ref{eq:backup_condition})
${\forall x \in \SI}$ renders ${\SI \subseteq \S}$ forward invariant along~(\ref{eq:closedloop}).
}
\end{theorem}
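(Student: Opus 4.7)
The plan is to split the theorem into its two assertions, existence and invariance, and to reduce each one to a result already in hand. The underlying idea is to interpret the conditions in~(\ref{eq:backup_condition}) as standard CBF-type inequalities for the parameterized family of composite functions $H_\theta(x) = h(\phib(\theta,x))$ and $H_T^{\rm b}(x) = \hb(\phib(T,x))$, which by construction characterize membership in $\SI$ via~(\ref{eq:invariantset}).

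For existence, I would simply exhibit the backup controller itself as a witness. By Lemma~\ref{lemma:backup_feasibility} there exist $\alpha, \alphab \in \Kinf$ such that $\kb$ satisfies the two inequalities in~(\ref{eq:backup_feasibility}) pointwise for every $x \in \SI$, and these are literally the inequalities in~(\ref{eq:backup_condition}) with $k(x)$ replaced by $\kb(x)$. Assumption~\ref{assum:backup} guarantees $\kb(x) \in \U$ on $\S \supseteq \SI$, so $k = \kb$ is an admissible choice. This gives the first claim.

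For invariance, fix any locally Lipschitz $k:\R^n \to \U$ satisfying~(\ref{eq:backup_condition}) and let $x(t) = \phi(t,x_0)$ be the closed-loop trajectory of~(\ref{eq:closedloop}) starting at $x_0 \in \SI$. The key observation is that the expressions in~(\ref{eq:hdot}), evaluated at $u=k(x(t))$, are precisely the total time derivatives of $H_\theta(x(t))$ and $H_T^{\rm b}(x(t))$ along the closed-loop flow, since $\phib(\theta,\cdot)$ is only a function of the initial condition and the chain rule with the sensitivity $\partial \phib(\theta,x)/\partial x$ yields exactly~(\ref{eq:hdot}). Then~(\ref{eq:backup_condition}) reads, for each fixed $\theta \in [0,T]$,
\begin{equation*}
\dot{H}_\theta(x(t)) \geq -\alpha(H_\theta(x(t))), \qquad \dot{H}_T^{\rm b}(x(t)) \geq -\alphab(H_T^{\rm b}(x(t))).
\end{equation*}
Since $x_0 \in \SI$, the definition~(\ref{eq:invariantset}) gives $H_\theta(x_0) \geq 0$ for all $\theta \in [0,T]$ and $H_T^{\rm b}(x_0) \geq 0$. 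Applying the standard CBF comparison argument (the same one underlying Theorem~\ref{thm:CBF}) pointwise in $\theta$ yields $H_\theta(x(t)) \geq 0$ and $H_T^{\rm b}(x(t)) \geq 0$ for all $t \geq 0$, which by~(\ref{eq:invariantset}) means $x(t) \in \SI$ for all $t \geq 0$. Hence $\SI$ is forward invariant, and $\SI \subseteq \S$ by construction.

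The main obstacle is conceptual rather than technical: one must be careful that the dot notation in~(\ref{eq:backup_condition}) refers to the derivative of $H_\theta$ along the \emph{closed-loop} trajectory of the true system, not along the backup flow, so that the CBF comparison lemma applies. Once this is recognized, the proof reduces to invoking Lemma~\ref{lemma:backup_feasibility} for existence and applying the standard CBF invariance argument to the two composite barrier functions $H_\theta$ and $H_T^{\rm b}$ that jointly cut out $\SI$.
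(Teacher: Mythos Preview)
Your proposal is correct and follows essentially the same route as the paper: existence via Lemma~\ref{lemma:backup_feasibility} (with $\kb$ as the witness), and forward invariance by applying Theorem~\ref{thm:CBF} to the composite barrier functions $H_\theta$ and $H_T^{\rm b}$ that define $\SI$. Your write-up is simply a more explicit unpacking of the two-sentence argument the paper gives.
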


\begin{proof}
The existence of a controller $k$ satisfying~(\ref{eq:backup_condition}) follows from Lemma~\ref{lemma:backup_feasibility}, since $\kb$ is such a controller.
The forward invariance of $\SI$ is the consequence of Theorem~\ref{thm:CBF}.
\end{proof}

\subsection{Implementation in Optimization Problems}

Theorem~\ref{thm:backup} can be directly used for controller synthesis, for example, by using~(\ref{eq:backup_condition}) in optimization problems like~(\ref{eq:QP}):
\begin{align}
\begin{split}
    k(x) \!=\! \underset{u \in \U}
    {\operatorname{argmin}}
    & \ \| u - k_{\rm d}(x) \|_{\Gamma}^2 \\
    \text{s.t.}
    & \ \dot{h} \big( \phib(\theta,x), u \big) \!\!\geq\! - \alpha \big( h(\phib(\theta,x)) \big), \forall \theta \!\in\! [0,T], \\
    & \ \dot{h}_{\rm b} \big( \phib(T,x), u \big) \!\!\geq\! - \alphab \big( \hb(\phib(T,x)) \big).
\end{split}
\label{eq:backupOP}
\end{align}
Note that the constraints are affine in $u$, cf.~(\ref{eq:hdot}), hence the optimization problem is convex, and it leads to a quadratic program if ${u \in \U}$ is also described by affine constraints.
Moreover, unlike~(\ref{eq:QP}), the optimization problem~(\ref{eq:backupOP}) is guaranteed to be feasible even if $h$ is not verified to be a CBF.

\begin{figure*}
\centering
\includegraphics[scale=1]{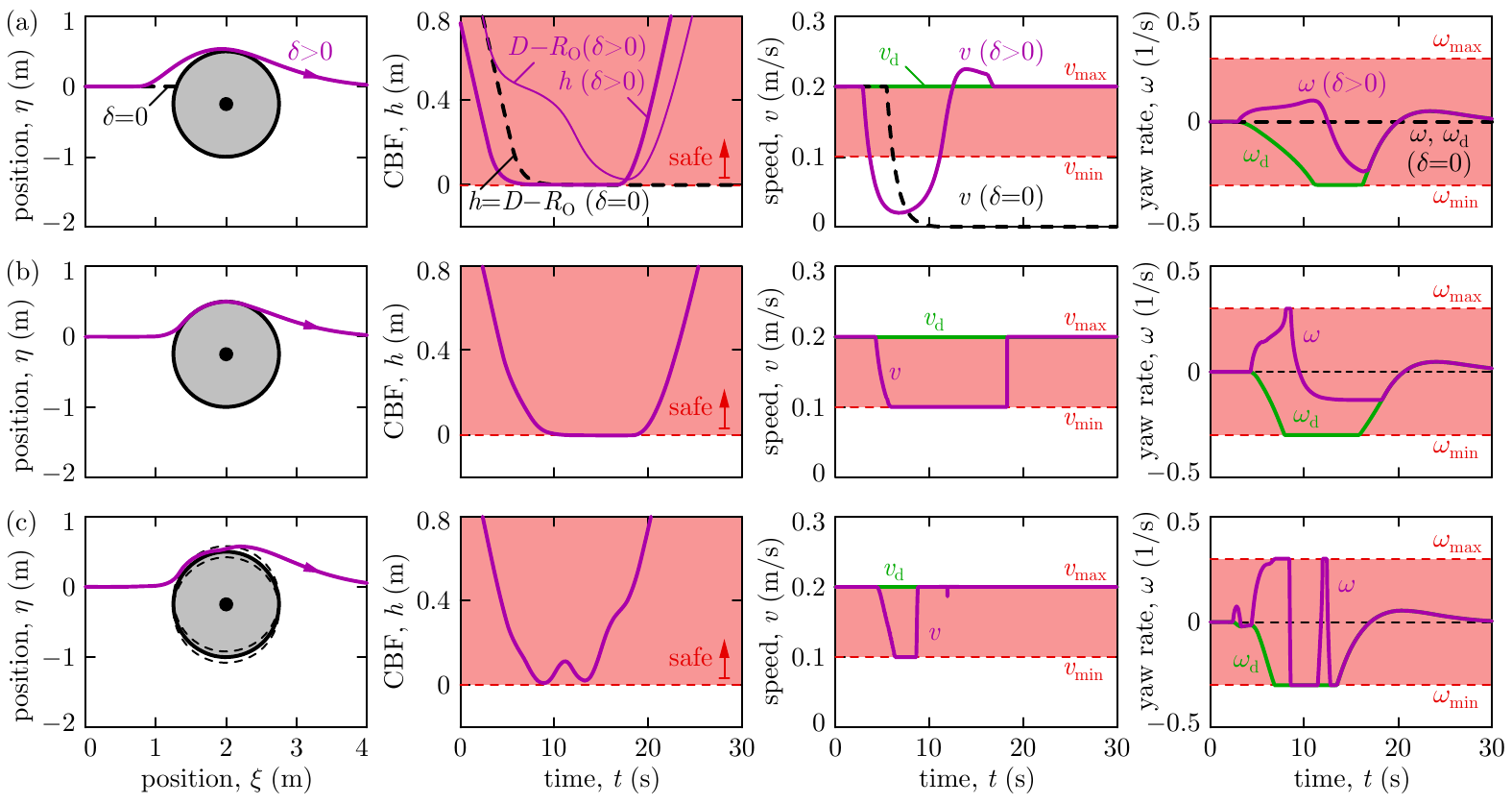}
\caption{
Safety-critical control of the unicycle model for obstacle avoidance.
(a) The CBF-based controller~(\ref{eq:QP}) maintains safety, without limits on the inputs (speed and yaw rate).
(b) The backup set method-based controller~(\ref{eq:backupOPdiscretized}) enforces safety with input constraints.
(c) Controller~(\ref{eq:backupOPdiscretized}) handles moving obstacle.
}
\label{fig:unicycle}
\end{figure*}

The constraints of~(\ref{eq:backupOP}) contain the terms in~(\ref{eq:hdot}), where:
\begin{align}
\begin{split}
    \derp{h \big( \phib(\theta, x) \big)}{x} & = \gradh \big( \phib(\theta, x) \big) \derp{ \phib(\theta, x)}{x}, \\
    \derp{\hb \big( \phib(T, x) \big)}{x} & = \gradhb \big( \phib(T, x) \big) \derp{ \phib(T, x)}{x}.
\end{split}
\end{align}
Here, ${Q(\theta, x) \triangleq \partial \phib(\theta, x) / \partial x}$ is the sensitivity of the flow $\phib(\theta,x)$ to its initial condition $x$.
$\phib(\theta, x)$ and $Q(\theta, x)$ can be calculated together by solving the initial value problem: 
\begin{align}
\begin{split}
    \phib'(\theta, x) & = f_{\rm b} \big( \phib(\theta, x) \big), \qquad \qquad \;\; \phib(0, x) = x, \\
    Q'(\theta, x) & = \derp{f_{\rm b}}{x} \big( \phib(\theta, x) \big) Q(\theta, x), \quad Q(0, x) = I,
\end{split}
\end{align}
where
prime denotes derivative with respect to $\theta$,
$f_{\rm b}$ is as in~(\ref{eq:backupsystem}),
and $I$ is the ${n \times n}$ identity matrix.

The optimization problem~(\ref{eq:backupOP}) contains infinitely many constraints parameterized by ${\theta \in [0,T]}$.
For computational tractability, they are usually discretized into finitely many, $N_{\rm c}$ constraints at
${\theta_{i} = i T/N_{\rm c}}$,
${i \in \mathcal{I} = \{ 0, 1, \ldots, N_{\rm c} \}}$,
yielding:
\begin{align}
\begin{split}
    k(x) \!=\! \underset{u \in \U}
    {\operatorname{argmin}}
    & \ \| u - k_{\rm d}(x) \|_{\Gamma}^2 \\
    \text{s.t.}
    & \ \dot{\bar{h}}_i(x,u) \!\geq\! - \alpha \big( \bar{h}_i(x) \big), \ \forall i \!\in\! \mathcal{I}, \\
    & \ \dot{\bar{h}}_{\rm b} (x,u) \!\geq\! - \alphab \big( \bar{h}_{\rm b}(x) \big).
\end{split}
\label{eq:backupOPdiscretized}
\end{align}
Here, the shorthand notations
${\bar{h}_i(x) = h(\phib(\theta_i,x))}$ and
${\bar{h}_{\rm b}(x) = \hb(\phib(T,x))}$ are used.
In what follows, we implement controller~(\ref{eq:backupOPdiscretized}) in an example.

\begin{example}[Unicycle model] \label{ex:unicycle}
Consider the unicycle model:
\begin{align}
\begin{split}
    \dot{\xi} & = v \cos \psi, \\
    \dot{\eta} & = v \sin \psi, \\
    \dot{\psi} & = \omega,
\end{split}
\label{eq:unicycle}
\end{align}
where the planar position ${p = \begin{bmatrix} \xi & \eta \end{bmatrix}^\top \in \R^2}$ and yaw angle ${\psi \in \R}$ constitute the state ${x = \begin{bmatrix} p^\top & \psi \end{bmatrix}^\top}$, while the speed ${v \in [v_{\min},v_{\max}] \subset \R}$ and yaw rate ${\omega \in [-\omega_{\max},\omega_{\max}] \subset \R}$ form the control input ${u = \begin{bmatrix} v & \omega \end{bmatrix}^\top}$.
We seek to drive the unicycle in the $\xi$ direction at a goal position $\eta_{\rm g}$ with a speed $v_{\rm g}$, while avoiding a circular obstacle of radius ${R_{\rm O}>0}$ at position $p_{\rm O}$.
First we consider a stationary obstacle, then a moving obstacle with velocity $\dot{p}_{\rm O}$ and acceleration $\ddot{p}_{\rm O}$.
Note that this latter problem is well-studied~\cite{Exarchos2015}, and closed-form expressions of control invariant sets exist~\cite{shoukry2017closedform}.

We realize the target motion by the desired controller:
\begin{equation}
    k_{\rm d}(x) = \begin{bmatrix}
    v_{\rm g} \\
    K_\eta (\eta_{\rm g} - \eta) - K_\psi \sin \psi
    \end{bmatrix},
\end{equation}
that is to be modified to obtain a safe controller $k(x)$.
To characterize safety, we first introduce the Eucledian distance $D$ from the obstacle center, the normal vector $n$ pointing away from the obstacle, and a related projection matrix $P$:
\begin{equation}
    D = \| p - p_{\rm O} \|, \quad
    n = \frac{p - p_{\rm O}}{\| p - p_{\rm O} \|}, \quad
    P = I - n n^\top.
\end{equation}
Notice that ${\partial D/\partial p = n^\top}$ and
${\partial n/\partial p = P/D}$ hold.
Furthermore, let us describe the heading direction by:
\begin{equation}
    q = \begin{bmatrix}
    \cos \psi \\ \sin \psi
    \end{bmatrix}, \quad
    r = \begin{bmatrix}
    -\sin \psi \\ \cos \psi
    \end{bmatrix}.
\end{equation}

With these preliminaries, we introduce the following function from~\cite{molnar2022modelfree} to characterize safety:
\begin{equation}
    h(x,t) = D - R_{\rm O} + \delta n^\top q,
\end{equation}
where a tunable parameter ${\delta \geq 0}$ penalizes heading towards the obstacle.
The corresponding derivatives read:
\begin{align}
\begin{split}
    \gradh(x,t) & = \begin{bmatrix}
    n^\top + \delta q^\top P/D & \delta n^\top r
    \end{bmatrix}, \\
    \derp{h}{t}(x,t) & = -n^\top \dot{p}_{\rm O} - \delta q^\top P \dot{p}_{\rm O}/D.
\end{split}
\end{align}
Note that $h$ explicitly depends on time through $p_{\rm O}$ if the obstacle is moving, and one must include $\partial h / \partial t$ in $\dot{h}$.
For stationary obstacle, this dependence on $t$ can be omitted.

Without input bounds, $h$ could be used as CBF and controller~(\ref{eq:QP}) would ensure safe behavior.
The result of executing~(\ref{eq:QP}) while excluding the input bounds (i.e., taking ${\U = \R^2}$) is illustrated in Fig.~\ref{fig:unicycle}(a) for the parameters in Table~\ref{tab:parameters} and ${x_0=0}$.
For ${\delta=0}$, i.e., when the heading direction is not penalized by the CBF, the unicycle stops in front of the obstacle (dashed line), which is safe but overly conservative.
For ${\delta>0}$, the unicycle safely executes the task (solid line).
However, since the input bounds are not incorporated into the optimization problem, the lower and upper speed limits are violated.
On the other hand, $h$ is not necessarily a valid CBF in the presence of input bounds.

To address input bounds, we rely on the backup controller:
\begin{equation}
    k_{\rm b}(x,t) = \begin{bmatrix}
    v_{\max} \\
    \omega_{\max} \tanh(n^\top r / \varepsilon)
    \end{bmatrix},
\end{equation}
that seeks to turn the unicycle away from the obstacle as fast as possible and drive away with maximum speed.
Parameter $\varepsilon$ tunes the aggressiveness of turning, and the yaw rate $\pm \omega_{\max}$ is achieved as ${\varepsilon \to 0}$.
Hence, this controller allows us to keep safety against obstacles that move slower than $v_{\max}$ and turn slower than $\omega_{\max}$.

The backup controller is associated with:
\begin{equation}
    \hb(x,t) = n^\top (q v_{\max} - \dot{p}_{\rm O}),
\end{equation}
whose derivatives are:
\begin{align}
\begin{split}
    \gradhb(x,t) = \begin{bmatrix}
    (q v_{\max} - \dot{p}_{\rm O})^\top P/D & n^\top r v_{\max}
    \end{bmatrix}, \\
    \derp{\hb}{t}(x,t) = -(q v_{\max} - \dot{p}_{\rm O})^\top P \dot{p}_{\rm O}/D - n^\top \ddot{p}_{\rm O}.
\end{split}
\end{align}
We remark that the backup set that is kept invariant by $\kb$ is in fact given by both
${\hb(x,t) \geq 0}$ and
${h(x,t) \geq 0}$,
 (rather than just ${\hb(x,t) \geq 0}$), but both of these functions are involved in~(\ref{eq:backup_condition}).

The efficacy of the backup set method with controller~(\ref{eq:backupOPdiscretized}) is shown in Fig.~\ref{fig:unicycle}(b) for parameters in Table~\ref{tab:parameters} and ${\delta = 0}$.
The controller maintains safety while satisfying the input bounds, and note that even ${\delta = 0}$ yields desired behavior.
The same controller is tested for the case of a moving obstacle in Fig.~\ref{fig:unicycle}(c).
The obstacle moves in the $\eta$ direction sinusoidally, with
${p_{\rm O}(t) = \begin{bmatrix} \xi_{\rm O} & \eta_{\rm O}(t) \end{bmatrix}^\top}$,
${\eta_{\rm O}(t) = \bar{\eta}_{\rm O} - A_\eta \sin(\Omega t) / \Omega}$.
The end result is still safety with bounded inputs.
\end{example}

\bgroup
\setlength{\tabcolsep}{3pt}
\begin{table}
\caption{Parameter Values for the Numerical Examples}
\begin{center}
\begin{tabular}{|c|c|c||c|c|c|}
\hline
Parameter & Value & Unit & Parameter & Value & Unit \\
\hline
$v_{\rm max}$ & 0.2 & m/s & $\delta$ (Ex.~\ref{ex:unicycle}) & 0, 0.5 & m \\
$v_{\rm min}$ & 0.1 & m/s & $\delta$ (Ex.~\ref{ex:quadruped}) & 0 & m \\
$\omega_{\rm max}$ & 0.3 & rad/s & $\varepsilon$ & 0.01 & 1 \\
$\eta_{\rm g}$ & 0 & m & $\gamma$ & 1 & 1/s \\
$v_{\rm g}$ & 0.2 & m/s & $\gammab$ & 1 & 1/s \\
$K_\eta$ & 0.5 & 1/(ms) & $\Gamma$ & ${\rm diag}\{1, 0.25\}$ & \{1, m$^2$\} \\
$K_\psi$ & 0.5 & 1/s & $T$ & 4 & s \\
$R_{\rm O}$ & 0.75 & m & $N_{\rm c}$ (Ex.~\ref{ex:unicycle}) & 80 & 1 \\
$\xi_{\rm O}$ & 2 & m & $N_{\rm c}$ (Ex.~\ref{ex:quadruped}) & 400 & 1 \\
$\bar{\eta}_{\rm O}$ & -0.25 & m & $\sigma$ & 0.1 & m/s \\
$A_\eta$ & $0.1$ & m/s & $\sigmab$ & 0.1 & m \\
$\Omega$ & $2\pi/5$ & rad/s & $p$ & $10^{18}$ & 1 \\
& & & $p_{\rm b}$ & $10^{18}$ & s$^2$ \\
\hline
\end{tabular}
\end{center}
\label{tab:parameters}
\end{table}
\egroup

\section{REDUCED ORDER MODELS}
\label{sec:ROM}

Model~(\ref{eq:system}) is often a simplified representation of a real control system.
The actual dynamics may be more complicated, higher dimensional, involving unmodeled phenomena.
Hence, we call~(\ref{eq:system}) as {\em reduced order model (ROM)}.
The backup set method is able to control the ROM with formal safety guarantees while respecting input constraints.
Yet, the safety of the actual {\em full order system (FOS)} is not necessarily ensured.
Next, we investigate the effect of unmodeled dynamics on safety, and derive conditions for the safety of the FOS by following our previous work~\cite{molnar2022modelfree}.
Then, we propose a robustified backup set method.
We consider the ROM to be given, while approaches to construct ROMs are out of scope of this paper.
Finally, we demonstrate our framework on an example, in which the locomotion of a quadruped (FOS) is controlled to follow the unicycle model (ROM).

Consider a FOS given by state ${X \in \R^N}$, input ${U \in \R^M}$, locally Lipschitz continuous functions
${F : \R^N \to \R^N}$ and
${G : \R^N \to \R^{N \times M}}$, and dynamics:
\begin{equation}
    \dot{X} = F(X) + G(X) U.
    \label{eq:fullsystem}
\end{equation}
Furthermore, let a {\em reduced order state} ${x \in \R^n}$ be defined by a continuously differentiable map ${P : \R^N \to \R^n}$:
\begin{equation}
    x = P(X).
\label{eq:reducedstate}
\end{equation}
The reduced order state is selected such that it describes safety-critical behavior.
Specifically, consider the safe set:
\begin{equation}
    \C = \{X \in \R^N: h(P(X)) \geq 0 \}
    \label{eq:fullsafeset}
\end{equation}
for the FOS with ${h : \R^n \to \R}$ given as before.

To achieve safe FOS behavior, one may construct a ROM like~(\ref{eq:system}), design a safety-critical ROM controller ${u = k(x)}$, and utilize a tracking controller ${K : \R^N \times \R^m \to \R^M}$, ${U = K(X,u)}$ so that the closed-loop FOS:
\begin{equation}
    \dot{X} = F(X) + G(X) K(X,u)
    \label{eq:fullclosedloop}
\end{equation}
tracks the ROM.
With appropriate ROM and tracking controller, the true dynamics of the reduced order state $x$ track the ROM accurately.
The true reduced order dynamics are:
\begin{equation}
    \dot{x} = f(x) + g(x) u + d,
    \label{eq:disturbedsystem}
\end{equation}
where ${d \in \R^n}$ is the deviation from the ROM, given by:
\begin{equation}
    d = \gradP (X) \big( F(X) \!+\! G(X) K(X,u) \big) \!-\! f(P(X)) \!-\! g(P(X)) u.
\end{equation}
Note that while $d$ acts as disturbance on the ROM, it can be viewed as tracking error that the FOS seeks to eliminate.

If the discrepancy $d$ is zero, the ROM captures the safety-critical behavior of the FOS accurately, and the backup set method can be used directly with the control invariant set:
\begin{equation}
    \CI \!=\! \left\{\! X \!\in\! \R^N :
    \!\!\begin{array}{l}
    h \big( \phib(\theta,P(X)) \big) \!\geq\! 0, \ \forall \theta \!\in\! [0,T], \\
    \hb \big( \phib(T,P(X)) \big) \!\geq\! 0 
    \end{array}
    \!\!\right\},
\end{equation}
for which ${X \in \CI \iff x \in \SI}$.
Then, per Theorem~\ref{thm:backup}, there exists a controller $k$ that satisfies~(\ref{eq:backup_condition}) and renders ${\CI \subseteq \C}$ forward invariant along~(\ref{eq:fullsystem}).
However, nonzero discrepancy $d$ may lead to safety violations.
Below we discuss conditions under which safety is preserved, and we investigate how to provide robustness against $d$.
During robustification,~(\ref{eq:disturbedsystem}) is considered while the discrepancy $d$ is viewed as an unknown but bounded term (see assumptions below) that represents modeling errors and disturbances associated with the ROM.

\subsection{Safety with Ideal Tracking}

If the ROM and the tracking controller are well-designed, the true reduced order dynamics converges to the ROM and the discrepancy $d$ vanishes.
First, we consider this ideal scenario as reflected by the following assumption.
\begin{assumption} \label{assum:tracking}
The tracking controller ${U = K(x,u)}$ drives the discrepancy between the true reduced order dynamics and the ROM to zero exponentially.
That is, there exist ${A \geq 0}$ and ${\lambda>0}$ such that ${\forall t \geq 0}$:
\begin{equation}
    \|d\| \leq A {\rm e}^{-\lambda t} .
\end{equation}
\end{assumption}
\noindent For simplicity, we assume exponential convergence, although one could also consider asymptotic stability with a class-$\mathcal{KL}$ function on the right-hand side.
Similarly, to simplify our discussion, we choose linear class-$\Kinf$ functions:
${\alpha(r) = \gamma r}$,
${\alphab(r) = \gammab r}$,
with ${\gamma, \gammab > 0}$.
Furthermore, we assume that the gradients of ${h(\phib(\theta, x))}$ and ${\hb(\phib(\theta, x))}$ are bounded, i.e., there exist ${D, D_{\rm b} \geq 0}$ such that
${\| \partial h \big( \phib(\theta,x) \big) / \partial x \| \leq D}$,
${\forall \theta \in [0,T]}$ and
${\| \partial \hb \big( \phib(\theta,x) \big) / \partial x \| \leq D_{\rm b}}$
hold ${\forall x \in \SI}$
with the Euclidean norm ${\| . \|}$.
These assumptions are relaxed in the next section.

Under these assumptions, we show that a time-varying subset of $\CI$ is control invariant.
We define this set $\Cd(t)$ by:
\begin{equation}
    \Cd(t) = \left\{ X \!\in\! \R^N :
    \begin{array}{l}
    H(\theta,X,t) \geq 0, \ \forall \theta \!\in\! [0,T], \\
    \Hb(T,X,t) \geq 0 
    \end{array}
    \right\},
    \label{eq:invariantset_timevarying}
\end{equation}
with:
\begin{align}
\begin{split}
    H(\theta,X,t) & = h \big( \phib(\theta,P(X)) \big) - \frac{D A {\rm e}^{-\lambda t}}{\lambda - \gamma}, \\
    \Hb(T,X,t) & = \hb \big( \phib(T,P(X)) \big) - \frac{D_{\rm b} A {\rm e}^{-\lambda t}}{\lambda - \gammab}.
\end{split}
\label{eq:timevarying_extension}
\end{align}

\begin{theorem} \label{thm:tracking}
\textit{
Consider the ROM~(\ref{eq:system}), set $\SI$ in~(\ref{eq:invariantset}) and a locally Lipschitz continuous controller ${k : \R^n \to \U}$ that satisfies~(\ref{eq:backup_condition}) with
${\alpha(r) = \gamma r}$,
${\alphab(r) = \gammab r}$,
${\forall x \in \SI}$.
Furthermore, consider the FOS~(\ref{eq:fullsystem}), reduced order state~(\ref{eq:reducedstate}), set $\C$ in~(\ref{eq:fullsafeset}), set $\Cd(t)$ in~(\ref{eq:invariantset_timevarying})-(\ref{eq:timevarying_extension}), and Assumption~\ref{assum:tracking}.
If ${\gamma, \gammab < \lambda}$, then ${\Cd(t) \subseteq \C}$ is forward invariant along~(\ref{eq:fullclosedloop}).}
\end{theorem}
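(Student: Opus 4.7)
My plan is to verify the two assertions of the theorem separately: the pointwise inclusion $\Cd(t) \subseteq \C$, and the forward invariance of $\Cd(t)$ along the closed-loop FOS~(\ref{eq:fullclosedloop}).

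The inclusion follows immediately from evaluating the defining constraint of $\Cd(t)$ at $\theta=0$: since $\phib(0,P(X)) = P(X)$, the requirement $H(0,X,t) \geq 0$ reads $h(P(X)) \geq DA e^{-\lambda t}/(\lambda - \gamma)$, whose right-hand side is nonnegative precisely because $\gamma < \lambda$ and $D,A \geq 0$. Hence $X \in \Cd(t)$ implies $X \in \C$; the same inequalities also certify $P(X) \in \SI$, which makes the backup condition~(\ref{eq:backup_condition}) applicable pointwise along any trajectory that remains in $\Cd(t)$.

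For forward invariance, I would differentiate $H(\theta, X(t), t)$ along~(\ref{eq:fullclosedloop}) under $U = K(X, k(P(X)))$. The chain rule, combined with the observation that $\gradP(X)\dot X$ reproduces the disturbed reduced-order dynamics~(\ref{eq:disturbedsystem}) with $u=k(P(X))$, gives
\begin{equation*}
\frac{{\rm d}}{{\rm d}t} h\big(\phib(\theta,P(X(t)))\big) = \dot{h}\big(\phib(\theta,P(X)), k(P(X))\big) + \derp{h(\phib(\theta,P(X)))}{x} d.
\end{equation*}
I would then invoke~(\ref{eq:backup_condition}) with $\alpha(r)=\gamma r$, the gradient bound $\|\partial h(\phib(\theta,x))/\partial x\| \leq D$, and Assumption~\ref{assum:tracking} to bound this derivative from below by $-\gamma h(\phib(\theta,P(X))) - DA e^{-\lambda t}$. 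Substituting $h(\phib(\theta,P(X))) = H(\theta,X,t) + DA e^{-\lambda t}/(\lambda-\gamma)$ and adding the time-derivative of the offset ($+DA\lambda e^{-\lambda t}/(\lambda-\gamma)$) needed to form $\dot H$, the disturbance term and the offset contributions cancel algebraically, leaving $\dot{H}(\theta,X(t),t) \geq -\gamma H(\theta,X(t),t)$. An analogous computation with $D_{\rm b}$, $\hb$, and $\gammab$ yields $\dot{\Hb}(T,X(t),t) \geq -\gammab \Hb(T,X(t),t)$.

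I would close via the comparison lemma applied at each fixed $\theta \in [0,T]$ and for the $\Hb$ constraint: nonnegative initial values remain nonnegative since $H(\theta,X(t),t) \geq H(\theta,X(0),0)\, e^{-\gamma t} \geq 0$ and similarly for $\Hb$ with rate $-\gammab$. Intersecting these guarantees over $\theta \in [0,T]$ gives forward invariance of $\Cd(t)$. The main obstacle, and the place where the hypothesis $\gamma,\gammab < \lambda$ is essential, is designing the time-varying offset in~(\ref{eq:timevarying_extension}) so that this precise cancellation occurs; without the strict inequalities the shift would either diverge or fail to be nonnegative, breaking both the set inclusion and the comparison step at the boundary of $\Cd(t)$.
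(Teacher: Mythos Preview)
Your proposal is correct and mirrors the paper's own proof: you establish $\Cd(t)\subseteq\C$ (and $P(X)\in\SI$) via the nonnegativity of the offsets when $\gamma,\gammab<\lambda$, differentiate $H$ and $\Hb$ along~(\ref{eq:fullclosedloop}) using the disturbed reduced-order dynamics, apply~(\ref{eq:backup_condition}) together with the gradient bounds and Assumption~\ref{assum:tracking}, and exploit the algebraic cancellation of the offset to obtain $\dot H\geq -\gamma H$ and $\dot\Hb\geq -\gammab\Hb$. The only cosmetic difference is that the paper closes by invoking Theorem~\ref{thm:CBF} whereas you phrase the final step via the comparison lemma, which amounts to the same argument.
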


\begin{proof}
For ${\gamma, \gammab < \lambda}$, the terms ${D A / (\lambda - \gamma) {\rm e}^{-\lambda t}}$ and ${D_{\rm b} A / (\lambda - \gammab) {\rm e}^{-\lambda t}}$ in~(\ref{eq:timevarying_extension}) are nonnegative ${\forall t \geq 0}$.
Hence, ${\Cd(t) \subseteq \CI}$, ${\forall t \geq 0}$, and ${X \in \Cd(t)}$ implies ${x \in \SI}$.
Then, Theorem~\ref{thm:backup} can be applied, and a controller $k$ satisfying~(\ref{eq:backup_condition}) is guaranteed to exist for all ${X \in \Cd(t)}$ since ${x \in \SI}$.
Given~(\ref{eq:timevarying_extension}) and~(\ref{eq:backup_condition}), the derivative of $H$ along~(\ref{eq:fullclosedloop}) satisfies:
\begin{align}
\begin{split}
    \dot{H}&(\theta, X, t, k(P(X)), d) \\
    & = \dot{h} \big( \phib(\theta,x), k(x) \big) + \derp{h \big( \phib(\theta, x) \big)}{x} d + \frac{\lambda D A {\rm e}^{-\lambda t}}{\lambda - \gamma} \\
    & \geq - \gamma h(\phib(\theta,x)) - \bigg\| \derp{h \big( \phib(\theta, x) \big)}{x} \bigg\| \| d \| + \frac{\lambda D A {\rm e}^{-\lambda t}}{\lambda - \gamma} \\
    & \geq - \gamma h(\phib(\theta,x)) - D A {\rm e}^{-\lambda t} + \frac{\lambda D A {\rm e}^{-\lambda t}}{\lambda - \gamma} \\
    & \geq - \gamma H(\theta, X, t).
\end{split}
\label{eq:tracking_proof}
\end{align}
Similarly, ${\dot{H}_{\rm b}(T, X, t, k(P(X)), d) \geq - \gammab \Hb(T, X, t)}$ can be proven.
Thus, by Theorem~\ref{thm:CBF} we can conclude the forward invariance of ${\Cd(t) \subseteq \CI}$, that implies a safe FOS.
\end{proof}

\begin{remark}
Theorem~\ref{thm:tracking} states that with fast enough tracking of the ROM, the FOS stays safe and evolves in a region where backup set method-based controllers are guaranteed to exist.
However, this result is conditioned on ideal exponential tracking (and the technical assumption about the bounded gradients of $h$ and $\hb$).
Next, we relax these restrictions.
\end{remark}

\subsection{Input-to-State Safe Backup Set Method}

Let us use the following weaker assumption on tracking.
\begin{assumption} \label{assum:tracking_relaxed}
The tracking controller ${U = K(x,u)}$ drives the discrepancy between the true reduced order dynamics and the ROM to a {\em neighborhood} of zero exponentially.
That is, there exist ${A,B \geq 0}$ and ${\lambda>0}$ such that ${\forall t \geq 0}$:
\begin{equation}
    \|d\|^2 \leq A {\rm e}^{-\lambda t} + B.
    \label{eq:ISS_tracking}
\end{equation}
\end{assumption}
\noindent Note that this assumption includes the case ${A=0}$, i.e., when the discrepancy does not necessarily decay but stays bounded below $B$.
We also remark that the square after the norm of $d$ is introduced for algebraic convenience only.
The assumption is shown to hold for the quadruped example below.

When the discrepancy does not decay to zero (${B \neq 0}$), safety can no longer be formally guaranteed by~(\ref{eq:backup_condition}).
To remedy this, some CBF approaches add extra robustifying terms to their safety constraints~\cite{ames2019issf, jankovic2018robust}.
For example, the approach of {\em input-to-state safe} CBFs modifies~(\ref{eq:safety_condition}) to ${\dot{h} \big( x, k(x) \big) \geq - \alpha \big( h(x) \big) + \sigma \| \gradh(x) \|^2}$ with ${\sigma > 0}$ (where ${\gradh(x)}$ could be replaced with ${\gradh(x) g(x)}$ in case of matched disturbances)~\cite{Alan2022}.
We propose to extend this approach to the {\em input-to-state safe backup set method}, by modifying~(\ref{eq:backup_condition}) to:
\begin{align}
\begin{split}
    \dot{h} \big( \phib(\theta,x), k(x) \big) & \geq - \alpha \big( h(\phib(\theta,x)) \big) \\
    & \!\!\! + \sigma \bigg\| \derp{h(\phib(\theta,x))}{x} \bigg\|^2, \ \forall \theta \in [0,T], \\
    \dot{h}_{\rm b} \big( \phib(T,x), k(x) \big) & \geq - \alphab \big( \hb(\phib(T,x)) \big) \\
    & \quad + \sigmab \bigg\| \derp{\hb(\phib(T,x))}{x} \bigg\|^2,
\end{split}
\label{eq:backup_condition_robust}
\end{align}
with tunable parameters ${\sigma, \sigmab > 0}$.

The approach of input-to-state safe CBFs is able to keep a neighborhood of the safe set invariant even with disturbances, and this neighborhood can be tuned as small as desired by parameter $\sigma$.
We seek to achieve the same results with input constraints using the backup set method.
Accordingly, we consider a neighborhood $\Sd$ of the control invariant set $\SI$:
\begin{equation}
    \Sd \!=\! \left\{\! x \!\in\! \R^n \!\!:
    \!\!\!\begin{array}{l}
    h(\phib(\theta,x)) \!\!\geq\!\! -B/(4 \sigma \gamma), \, \forall \theta \!\in\! [0,T], \\
    \hb(\phib(T,x)) \!\!\geq\!\! -B/(4 \sigmab \gammab)
    \end{array}
    \!\!\!\right\}\!\!,
    \label{eq:invariantset_neighborhood}
\end{equation}
determined by $\sigma$, $\sigmab$, and we redefine set $\C_{\rm d}(t)$ in~(\ref{eq:invariantset_timevarying}) with:
\begin{align}
\begin{split}
    H(\theta,X,t) & \!=\! h \big( \phib(\theta,P(X)) \big) \!-\! \frac{A {\rm e}^{-\lambda t}}{4 \sigma (\lambda - \gamma)} \!+\! \frac{B}{4 \sigma \gamma}, \\
    \Hb(T,X,t) & \!=\! \hb \big( \phib(T,P(X)) \big) \!-\! \frac{A {\rm e}^{-\lambda t}}{4 \sigmab (\lambda - \gammab)} \!+\! \frac{B}{4 \sigmab \gammab}.
\end{split}
\label{eq:timevarying_extension_robust}
\end{align}
Then, we state the invariance of set $\Cd(t)$ that can be made arbitrarily close to the safe set $\C$ by increasing $\sigma$, $\sigmab$.

\begin{figure*}
\centering
\includegraphics[scale=1]{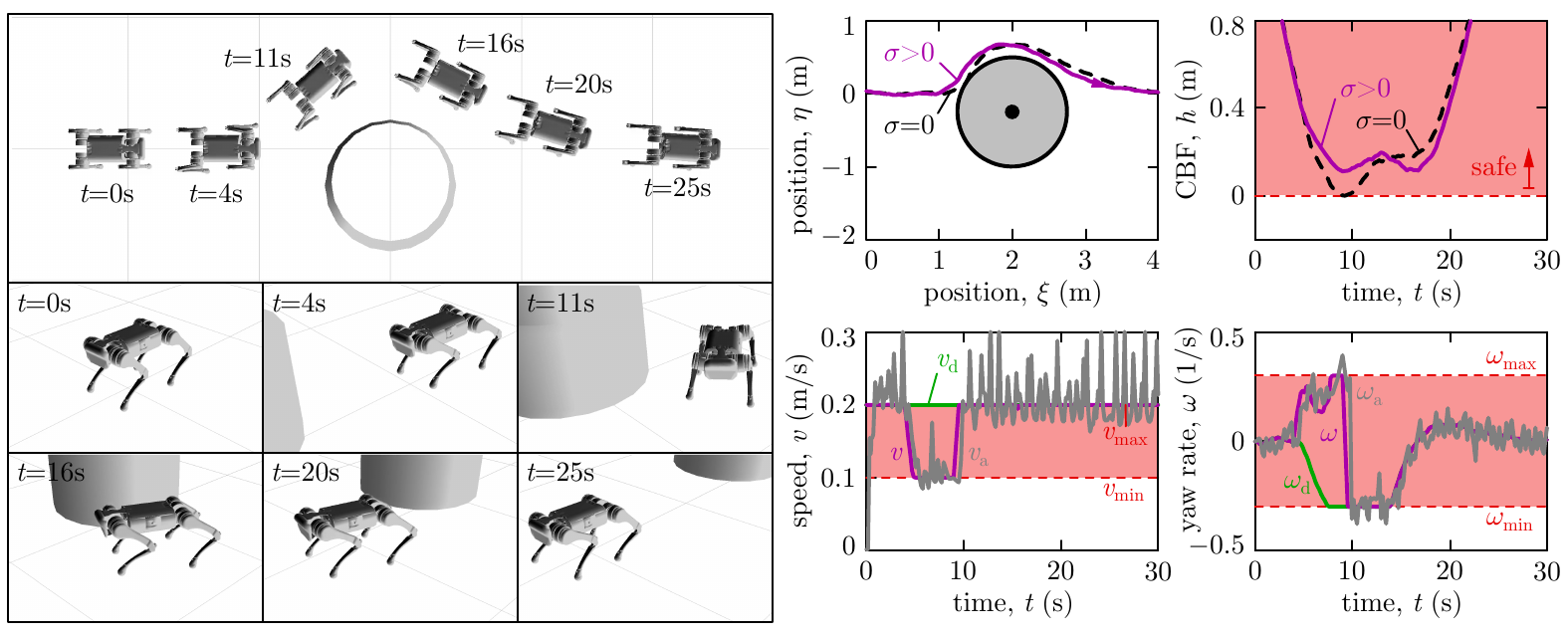}
\caption{
Application of the proposed safety-critical control framework with input constrained reduced order model in quadrupedal locomotion.
The quadruped safely navigates by tracking the speed and yaw rate synthesized with the unicycle model and the input-to-state safe backup set method.
}
\label{fig:quadruped}
\end{figure*}

\begin{theorem} \label{thm:robust}
\textit{
Consider the ROM~(\ref{eq:system}), set $\Sd$ in~(\ref{eq:invariantset_neighborhood}) and a locally Lipschitz continuous controller ${k : \R^n \to \U}$ that satisfies~(\ref{eq:backup_condition_robust}) with
${\alpha(r) = \gamma r}$,
${\alphab(r) = \gammab r}$,
${\forall x \in \Sd}$.
Furthermore, consider the FOS~(\ref{eq:fullsystem}), reduced order state~(\ref{eq:reducedstate}),
set $\Cd(t)$ in~(\ref{eq:invariantset_timevarying})-(\ref{eq:timevarying_extension_robust}), and Assumption~\ref{assum:tracking_relaxed}.
If ${\gamma, \gammab < \lambda}$, then
$\Cd(t)$ is forward invariant along~(\ref{eq:fullclosedloop}).}
\end{theorem}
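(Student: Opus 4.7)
The plan is to mirror the proof of Theorem~\ref{thm:tracking}, but replace the argument that used $\|\partial h/\partial x\| \leq D$ together with the vanishing bound $\|d\| \leq A{\rm e}^{-\lambda t}$ with a Young's inequality argument that exploits the extra robustifying terms in~(\ref{eq:backup_condition_robust}) to absorb the cross term produced by the disturbance. First I would check that the hypothesis on $k$ applies on $\Cd(t)$: because $\lambda > \gamma$ and $\lambda > \gammab$, the additive offsets $A{\rm e}^{-\lambda t}/(4\sigma(\lambda-\gamma))$ and $A{\rm e}^{-\lambda t}/(4\sigmab(\lambda-\gammab))$ in~(\ref{eq:timevarying_extension_robust}) are nonnegative, so $X \in \Cd(t)$ forces $h(\phib(\theta,P(X))) \geq -B/(4\sigma\gamma)$ and $\hb(\phib(T,P(X))) \geq -B/(4\sigmab\gammab)$, i.e.\ $P(X) \in \Sd$. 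Hence the robust inequalities~(\ref{eq:backup_condition_robust}) are available at every reduced state encountered on $\Cd(t)$.

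Next I would differentiate $H$ along the closed-loop FOS, writing
\[
\dot{H}(\theta, X, t) = \dot{h}\big(\phib(\theta,x), k(x)\big) + \derp{h(\phib(\theta,x))}{x} d + \frac{\lambda A{\rm e}^{-\lambda t}}{4\sigma(\lambda-\gamma)},
\]
where $\dot{x} = f(x)+g(x)k(x)+d$ from~(\ref{eq:disturbedsystem}). Applying the first line of~(\ref{eq:backup_condition_robust}) with $\alpha(r) = \gamma r$ lower bounds $\dot h$, and then the Young/completing-the-square inequality
$a^{\!\top}b \geq -\sigma\|b\|^2 - \|a\|^2/(4\sigma)$, with $a=d$ and $b=(\partial h/\partial x)^{\!\top}$, absorbs the indefinite cross term against the $\sigma\|\partial h/\partial x\|^2$ slack provided by the robust constraint. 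The squared-gradient terms cancel exactly, leaving
\[
\dot{H} \geq -\gamma\, h(\phib(\theta,x)) - \frac{\|d\|^2}{4\sigma} + \frac{\lambda A{\rm e}^{-\lambda t}}{4\sigma(\lambda-\gamma)}.
\]
Substituting the bound $\|d\|^2 \leq A{\rm e}^{-\lambda t}+B$ from Assumption~\ref{assum:tracking_relaxed} and using $\lambda/(\lambda-\gamma)-1 = \gamma/(\lambda-\gamma)$ reduces the right-hand side to $-\gamma\, h(\phib(\theta,x)) + \gamma A{\rm e}^{-\lambda t}/(4\sigma(\lambda-\gamma)) - B/(4\sigma)$, which is precisely $-\gamma H(\theta,X,t)$ by~(\ref{eq:timevarying_extension_robust}). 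The same calculation, with $\sigmab$, $\gammab$ and the second line of~(\ref{eq:backup_condition_robust}), yields $\dot{H}_{\rm b}(T,X,t) \geq -\gammab \Hb(T,X,t)$. Invoking Theorem~\ref{thm:CBF} (applied to the time-varying barriers $H$ and $\Hb$ with linear class-$\Kinf$ comparison functions) then gives forward invariance of $\Cd(t)$ along~(\ref{eq:fullclosedloop}).

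The step I expect to be the bottleneck is the cross term $(\partial h/\partial x)d$: unlike in Theorem~\ref{thm:tracking}, we no longer have a uniform gradient bound or a vanishing $\|d\|$, so the term does not decay and cannot be dominated by the time-varying offset alone. This is exactly what the robustification $\sigma\|\partial h/\partial x\|^2$ is introduced for, and the coefficients $1/(4\sigma\gamma)$ and $1/(4\sigmab\gammab)$ in the definition of $\Sd$ and in~(\ref{eq:timevarying_extension_robust}) are chosen precisely to make the completing-the-square and the comparison inequality $\dot H \geq -\gamma H$ close without slack, producing a neighborhood $\Cd(t)$ of $\C$ whose size shrinks as $\sigma, \sigmab$ grow.
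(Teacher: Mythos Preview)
Your proposal is correct and takes essentially the same approach as the paper: the paper's proof establishes precisely your completing-the-square bound $\sigma\|\partial h/\partial x\|^2 - \|\partial h/\partial x\|\,\|d\| \geq -(A{\rm e}^{-\lambda t}+B)/(4\sigma)$ (what you call Young's inequality), then substitutes it after applying Cauchy--Schwarz and~(\ref{eq:backup_condition_robust}) to obtain $\dot H \geq -\gamma H$, exactly as you outline. Your explicit verification that $X\in\Cd(t)\Rightarrow P(X)\in\Sd$ is a useful clarification that the paper leaves implicit.
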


\begin{proof}
First, we show that the following inequality holds:
\begin{align}
\begin{split}
    \sigma & \bigg\| \derp{h(\phib(\theta,x))}{x} \bigg\|^2 - \bigg\| \derp{h(\phib(\theta,x))}{x} \bigg\| \| d \| \\
    & \geq \bigg( \sqrt{\sigma} \bigg\| \derp{h(\phib(\theta,x))}{x} \bigg\| - \frac{\| d \|}{2 \sqrt{\sigma}} \bigg)^2 - \frac{\| d \|^2}{4 \sigma} \\
    & \geq - \frac{A {\rm e}^{-\lambda t} + B}{4 \sigma}.
\end{split}
\label{eq:ISSf_proof}
\end{align}
Then, the rest of the proof follows that of Theorem~\ref{thm:tracking}:
\begin{align}
\begin{split}
    \dot{H}&(\theta, X, t, k(P(X)), d) \\
    & = \dot{h} \big( \phib(\theta,x), k(x) \big) \!+\! \derp{h \big( \phib(\theta, x) \big)}{x} d \!+\! \frac{\lambda A {\rm e}^{-\lambda t}}{4 \sigma (\lambda \!-\! \gamma)} \\
    & \geq - \gamma h(\phib(\theta,x)) - \frac{A {\rm e}^{-\lambda t} + B}{4 \sigma} + \frac{\lambda A {\rm e}^{-\lambda t}}{4 \sigma (\lambda - \gamma)} \\
    & \geq - \gamma H(\theta, X, t),
\end{split}
\end{align}
cf.~(\ref{eq:tracking_proof}), where in the second step we used the Cauchy-Schwartz inequality and substituted~(\ref{eq:ISSf_proof}).
\end{proof}

\begin{remark}
Theorem~\ref{thm:robust} states that input-to-state stable tracking of the ROM (i.e., when the discrepancy $d$ decays to or is within a neighborhood of zero) makes the FOS stay in the set $\Cd(t)$.
This set can be tuned to be as close to the safe set $\C$ as desired using $\sigma$ and $\sigmab$, cf.~(\ref{eq:timevarying_extension_robust}), and it no longer depends on the bounds of the gradients of $h$ and $\hb$.
However, we cannot claim the existence of a controller $k$ satisfying~(\ref{eq:backup_condition_robust}) ${\forall x \in \Sd}$ anymore.
Hence, instead of~(\ref{eq:backupOPdiscretized}), one may implement a relaxed optimization problem:
\begin{align}
\begin{split}
    k(x) \!=\! \underset{\substack{u \in \U \\ \delta_i, \delta_{\rm b} \geq 0}}
    {\operatorname{argmin}}
    & \ \| u - k_{\rm d}(x) \|_{\Gamma}^2 + \sum_{i \in \mathcal{I}} p_i \delta_i^2 + p_{\rm b} \delta_{\rm b}^2 \\
    \text{s.t.}
    & \ \dot{\bar{h}}_i(x,u) \!\geq\! - \alpha \big( \bar{h}_i(x) \big) \!+\! \sigma \bigg\| \derp{\bar{h}_i(x)}{x} \bigg\|^2 \!\!\!-\! \delta_i, \\
    & \qquad \qquad \qquad \qquad \qquad \qquad \qquad \forall i \!\in\! \mathcal{I}, \\
    & \ \dot{\bar{h}}_{\rm b} (x,u) \!\geq\! - \alphab \big( \bar{h}_{\rm b}(x) \big) \!+\! \sigmab \bigg\| \derp{\bar{h}_{\rm b}(x)}{x} \bigg\|^2 \!\!\!-\! \delta_{\rm b},
\end{split}
\label{eq:backupOProbust}
\end{align}
with
slack variables ${\delta_i, \delta_{\rm b} \geq 0}$ and penalties ${p_i, p_{\rm b} \gg 1}$, ${i \in \mathcal{I}}$.
Formulating provably safe and feasible controllers without this relaxation is subject to future research.
\end{remark}

\begin{example}[Quadrupedal locomotion] \label{ex:quadruped}
Consider the Unitree A1 quadrupedal robot shown in Fig.~\ref{fig:quadruped}.
We seek to execute legged locomotion and accomplish the obstacle avoidance task of Example~\ref{ex:unicycle}.
We consider the quadruped as FOS, and we rely on an existing walking controller for locomotion with given speed and yaw rate.
As such, the walking tracks the unicycle model in Example~\ref{ex:unicycle}, which serves as ROM.

The quadruped has 18 degrees of freedom and 12 actuated joints.
Its motion is described by the configuration ${q \in \R^{18}}$, velocities ${\dot{q} \in \R^{18}}$, inputs ${U \in \R^{12}}$ and holonomic constraints ${c(q) \equiv 0 \in \R^{n_{\rm c}}}$ at the $n_{\rm c}$ number of contacts between the feet and the ground.
The dynamics are governed by the Euler-Lagrange equations:
\begin{align}
\begin{split}
    D(q) \ddot{q} + H(q,\dot{q}) & = B U + J(q)^\top \lambda, \\
    J(q)\ddot{q} + \dot{J}(q,\dot{q})\dot{q} &= 0, \label{eq:EulerLagrange}
\end{split}
\end{align}
with mass matrix ${D(q)\in\mathbb{R}^{18 \times 18}}$, Coriolis and gravity terms ${H(q,\dot{q}) \in \R^{18}}$, Jacobian ${J(q) = \partial c(q)/\partial q \in \mathbb{R}^{n_c \times 18}}$, and constraint wrench ${\lambda \in \mathbb{R}^{n_c}}$.
This yields the FOS~(\ref{eq:fullsystem}) with
the state ${X = \begin{bmatrix} q^\top & \dot{q}^\top \end{bmatrix}^\top \in \R^{36}}$ and expressions:
\begin{equation}
    F(X) \!\!=\!\!
    \begin{bmatrix}
    \dot{q} \\
    -D(q)^{-1} \!\big( H(q,\dot q) \!-\! J(q)^\top \!\lambda \big)\!
    \end{bmatrix}\!\!, \
    G(X) \!\!=\!\!
    \begin{bmatrix}
    0 \\
    D(q)^{-1} B
    \end{bmatrix}\!\!.
\end{equation}

During obstacle avoidance, safety is determined by the planar body position $\xi$ and $\eta$ and the yaw angle $\psi$, leading to the reduced order state ${x \in \R^3}$ of Example~\ref{ex:unicycle}.
These states are elements of the full state $X$.
The corresponding equations in the FOS~(\ref{eq:EulerLagrange}) reduce to the unicycle model~(\ref{eq:unicycle}) if roll and pitch are neglected, thus the unicycle is chosen as ROM.
For legged locomotion, we use the inverse dynamics quadratic program based walking controller, ${U = K(X,u)}$, specified in~\cite{Ubellacker2021}.
This controller is able to track speed and yaw rate commands in the reduced order input ${u \in \R^2}$ as long as they are below $v_{\max}$ and between $\pm \omega_{\max}$, respectively.
We also prescribe the minimum speed $v_{\min}$ so that the quadruped is not allowed to stop.
We use the input-to-state safe backup set method, with details in Example~\ref{ex:unicycle}, to find safe speed and yaw rate commands within these bounds.

Fig.~\ref{fig:quadruped} shows high-fidelity simulations of the quadrupedal locomotion\footnote{See video at: https://youtu.be/h8-x7-4eqWs.}.
The speed and yaw rate are commanded using the proposed controller~(\ref{eq:backupOProbust}), the formulas in Example~\ref{ex:unicycle}, the parameters in Table~\ref{tab:parameters}, and the CVXOPT solver~\cite{andersen2012cvxopt}.
The radius $R_{\rm O}$, that the robot's center should stay outside of, consists of the radius of the obstacle (${0.45\,{\rm m}}$) and the size of the quadruped (${0.3\,{\rm m}}$).
With the proposed controller, the quadruped successfully navigates around the obstacle as shown by the motion tiles. 
Observe that safety is maintained with respect to the specification $h$.
Meanwhile, speed and yaw rate commands stay within desired bounds (while their actual value may exceed the bounds).
The figure also indicates the tracking performance of the walking controller, by comparing the actual speed $v_{\rm a}$ and yaw rate $\omega_{\rm a}$ (extracted from $X$) to the commands $v$ and $\omega$.
Indeed, the discrepancy $d$ between the commanded velocities
${\dot{x} = \begin{bmatrix} v \cos\psi & v \sin\psi & \omega \end{bmatrix}^\top}$
and the corresponding actual values decays and stays bounded, as in~(\ref{eq:ISS_tracking}) in Assumption~\ref{assum:tracking_relaxed}. 
Finally, the trajectory with the standard backup set method, (i.e., controller~(\ref{eq:backupOPdiscretized}) and ${\sigma = 0}$, ${\sigmab = 0}$) is shown by dashed lines.
This case gets closer to safety violations due to lack of robustness to the discrepancy between the ROM and FOS.

This example demonstrates the efficacy of the proposed safety-critical control approach, in which an input constrained ROM is combined with the backup set method and a reliable tracking controller.
The results show safe behavior on a complex quadrupedal robot during obstacle avoidance.
\end{example}

\section{CONCLUSIONS}
\label{sec:concl}

This paper addressed safety-critical control using reduced order models that have bounded inputs.
To formally guarantee safety while respecting input bounds, the backup set method was used.
Robustness with respect to the discrepancy between the reduced order model and the full order system was analyzed.
Conditions were derived for the safety of the full system, and the input-to-state safe backup set method was proposed to robustify against the above mentioned discrepancy.
The efficacy of the proposed control framework was demonstrated by controlling a quadruped for obstacle avoidance while relying on the unicycle model.
Future work includes studying the feasibility of the robustified controller.


\section*{APPENDIX}

\begin{proof}[Proof of Lemma~\ref{lemma:backup_invariance}]
By definition~(\ref{eq:invariantset}) of $\SI$ and Assumption~\ref{assum:backup}, we have:
\begin{equation}
    x \in \SI
    \implies \phib(\theta,x) \in \Sb \subseteq \S, \ \forall \theta \geq T.
\label{eq:SI_invariance_proof_1}
\end{equation}
From this, and the fact that:
\begin{equation}
    \phib(\theta+\vartheta, x_0) = \phib(\theta, \phib(\vartheta,x_0)),
\label{eq:flow_property}
\end{equation}
holds for any arbitrary ${\theta,\vartheta \geq 0}$ and ${x_0 \in \R^n}$, we obtain:
\begin{equation}
    x \in \SI
    \implies \phib(T,\phib(\vartheta,x)) \in \Sb, \ \forall \vartheta \geq 0.
\label{eq:SI_invariance_proof_2}
\end{equation}
Furthermore, the definition~(\ref{eq:invariantset}) of $\SI$ and~(\ref{eq:SI_invariance_proof_1}) give:
\begin{equation}
    x \in \SI
    \implies \phib(\theta,x) \in \S, \ \forall \theta \geq 0.
\label{eq:SI_invariance_proof_3}
\end{equation}
Using the property~(\ref{eq:flow_property}) again, we obtain:
\begin{equation}
    x \!\in\! \SI
    \implies \phib(\theta,\phib(\vartheta,x)) \!\in\! \S, \ \forall \theta \!\in\! [0,T],\ \forall \vartheta \!\geq\! 0.
\label{eq:SI_invariance_proof_4}
\end{equation}
Thus, (\ref{eq:SI_invariance_proof_2}),~(\ref{eq:SI_invariance_proof_4}) and the definition~(\ref{eq:invariantset}) of $\SI$ yield~(\ref{eq:SI_invariance}).
\end{proof}

\begin{proof}[Proof of Lemma~\ref{lemma:backup_feasibility}]
The definition~(\ref{eq:invariantset}) of $\SI$ can be re-written as:
\begin{equation}
    \SI = \left\{ x \in \R^n :
    \begin{array}{l}
    h(\phib(\theta,x)) \geq 0, \ \forall \theta \in [0,T], \\
    \hb(\phib(T,x)) \geq 0 
    \end{array}
    \right\}.
\end{equation}
$\SI$ is rendered forward invariant by the backup controller $\kb$ per Lemma~\ref{lemma:backup_invariance}.
Therefore, Nagumo's theorem~\cite{nagumo1942lage} states:
\begin{align}
\begin{split}
    h(\phib(\theta,x)) = 0 & \implies \dot{h} \big( \phib(\theta,x), \kb(x) \big) \geq 0, \\
    \hb(\phib(T,x)) = 0 & \implies \dot{h}_{\rm b} \big( \phib(T,x), \kb(x) \big) \geq 0.
\end{split}
\label{eq:Nagumo}
\end{align}
Consider the second condition and let:
\begin{equation}
    \tilde{\S}(x) = \{\tilde{x} \in \R^n: \hb(\phib(T,x)) \!\geq\! \hb(\phib(T,\tilde{x})) \!\geq\! 0 \}.
\end{equation}
Note that ${\forall x \in \SI}$, ${\tilde{\S}(x)}$ is nonempty and ${x \in \tilde{\S}(x)}$, thus:
\begin{equation}
    \dot{h}_{\rm b} \big( \phib(T,x), \kb(x) \big) \geq \inf_{\tilde{x} \in \tilde{\S}(x)} \dot{h}_{\rm b} \big( \phib(T,\tilde{x}), \kb(\tilde{x}) \big).
\label{eq:inf_proof_1}
\end{equation}
Now let us define the set $\tilde{\S}_r$ for ${r \geq 0}$ and ${\tilde{\alpha}_{\rm b}: \R_{\geq 0} \to \R}$:
\begin{align}
    \tilde{\S}_r & = \{\tilde{x} \in \R^n: r \geq \hb(\phib(T,\tilde{x})) \geq 0 \}. \\
    \tilde{\alpha}_{\rm b}(r) & = -\inf_{\tilde{x} \in \tilde{\S}_r} \dot{h}_{\rm b} \big( \phib(T,\tilde{x}), \kb(\tilde{x}) \big).
\end{align}
Then,~(\ref{eq:inf_proof_1}) is equivalent to:
\begin{equation}
    \dot{h}_{\rm b} \big( \phib(T,x), \kb(x) \big) \geq - \tilde{\alpha}_{\rm b} \big( \hb(\phib(T,x)) \big).
    \label{eq:inf_proof_2}
\end{equation}
Note that $\tilde{\alpha}_{\rm b}$ is monotonically increasing with respect to $r$ since the $\inf$ is taken over a larger set $\tilde{\S}_r$ as $r$ grows.
Furthermore, $\tilde{\alpha}_{\rm b}$ satisfies ${\tilde{\alpha}_{\rm b}(0) \leq 0}$ based on~(\ref{eq:Nagumo}).
Therefore, there exists ${\alphab \in \Kinf}$ such that ${\alphab(r) \geq \tilde{\alpha}_{\rm b}(r)}$, ${\forall r \geq 0}$.
This, together with~(\ref{eq:inf_proof_2}), leads to the second statement in~(\ref{eq:backup_feasibility}).
The first statement can be proven the same way: showing the existence of ${\alpha_{\theta} \in \Kinf}$  for each $\theta \in [0,T]$ and defining ${\alpha \in \Kinf}$ such that ${\alpha(r) = \max_{\theta \in [0,T]} \alpha_{\theta}(r)}$.
\end{proof}



\vspace{0.3cm}
\noindent \textbf{Acknowledgment.}  
We thank Albert Li and Andrew Taylor for discussions about safety with reduced order models, and Wyatt Ubellacker for his invaluable help in synthesizing low-level controllers for the quadruped.

\bibliographystyle{IEEEtran}
\bibliography{2022_acc}

\end{document}